\newcommand{\bR}{\mathbb{R}}
\newcommand{\bE}{\mathbb{E}}
\newcommand{\cP}{\mathcal{P}}
\begin{document}




\title*{Relation between the Kantorovich-Wasserstein metric and the Kullback-Leibler divergence}
\titlerunning{Relation between Kantorovich metric and KL-divergence}
\author{Roman V. Belavkin}
\authorrunning{Roman Belavkin}
\institute{Roman V. Belavkin \at Middlesex University, London NW4 4BT, UK,
\email{R.Belavkin@mdx.ac.uk}}

\maketitle



\abstract{%
We discuss a relation between the Kantorovich-Wasserstein (KW) metric and the Kullback-Leibler (KL) divergence.  The former is defined using the optimal transport problem (OTP) in the Kantorovich formulation.  The latter is used to define entropy and mutual information, which appear in variational problems to find optimal channel (OCP) from the rate distortion and the value of information theories.  We show that OTP is equivalent to OCP with one additional constraint fixing the output measure, and therefore OCP with constraints on the KL-divergence gives a lower bound on the KW-metric.  The dual formulation of OTP allows us to explore the relation between the KL-divergence and the KW-metric using decomposition of the former based on the law of cosines.  This way we show the link between two divergences using the variational and geometric principles.
\\
\\
{\bf Keywords}: Kantorovich metric $\cdot$ Wasserstein metric $\cdot$ Kullback-Leibler divergence $\cdot$ Optimal transport $\cdot$ Rate distortion $\cdot$ Value of information}

\section{Introduction}
\label{sec:intro}

The study of the optimal transport problem (OTP), initiated by Gaspar Monge \cite{Monge81}, was advanced greatly when Leonid Kantorovich reformulated the problem in the language of probability theory \cite{Kantorovich42}.  Let $X$ and $Y$ be two measurable sets, and let $\cP(X)$ and $\cP(Y)$ be the sets of all probability measures on $X$ and $Y$ respectively, and let $\cP(X\times Y)$ be the set of all joint probability measures on $X\times Y$.  Let $c:X\times Y\to\bR$ be a non-negative measurable function, which we shall refer to as the \emph{cost function}.  Often one takes $X\equiv Y$ and $c(x,y)$ to be a metric.  We remind that when $X$ is a complete and separable metric space (or if it is a homeomorphic image of it), then all probability measures on $X$ are Radon (i.e. inner regular).

The expected cost with respect to probability measure $w\in\cP(X\times Y)$ is the corresponding integral:
\[
\bE_w\{c\}:=\int_{X\times Y} c(x,y)\,dw(x,y)
\]
It is often assumed that the cost function is such that the above integral is lower semicontonuous or closed functional of $w$ (i.e. the set $\{w:\bE_w\{c\}\leq\upsilon\}$ is closed for all $\upsilon\in\bR$).  In particular, this is the case when $c(w):=\bE_w\{c\}$ is a continuous linear functional.

Given two probability measures $q\in\cP(X)$ and $p\in\cP(Y)$, we denote by $\Gamma[q,p]$ the set of all joint probability measures $w\in\cP(X\times Y)$ such that their marginal measures are $\pi_X w=q$ and $\pi_Y w=p$:
\[
\Gamma[q,p]:=\{w\in\cP(X\times Y): \pi_Xw=q,\ \pi_Yw=p\}
\]
Kantorovich's formulation of OTP is to find optimal joint probability measure in $\Gamma[q,p]$ minimizing the expected cost $\bE_w\{c\}$.  The optimal joint probability measure $w\in\cP(X\times Y)$ (or the corresponding conditional probability measure $dw(y\mid x)$) is called the \emph{optimal transportation plan}.  The corresponding optimal value is often denoted
\begin{equation}
  K_c[p,q]:=\inf\left\{\bE_w\{c\}:  w\in\Gamma[q,p]\right\}
  \label{eq:k-distance}
\end{equation}
The non-negative value above allows one to compare probability measures, and when the cost function $c(x,y)$ is a metric on $X\equiv Y$, then $K_c[p,q]$ is a metric on the set $\cP(X)$ of all probability measures on $X$, and it is often called the \emph{Wasserstein metric} due to a paper by Dobrushin \cite{Dobrushin70,Vasershtein69}, even though it was introduced much earlier by Kantorovich \cite{Kantorovich42}.  It is known that the Kantorovich-Wasserstein (KW) metric (or related to it Kantorovich-Rubinstein metric) induces a topology equivalent to the weak topology on $\cP(X)$ \cite{Bogachev07}.

Another important functional used to compare probability measures is the Kullback-Leibler divergence \cite{Kullback-Leibler51}:
\begin{equation}
  D[p,q]:=\int_X\left[\ln\frac{dp(x)}{dq(x)}\right]\,dp(x)
  \label{eq:kl-divergence}
\end{equation}
where it is assumed that $p$ is absolutely continuous with respect to $q$ (otherwise the divergence can be infinite).  It is not a metric, because it does not satisfy the symmetry and the triangle axioms, but it is non-negative, $D[p,q]\geq0$, and $D[p,q]=0$ if and only if $p=q$.  The KL-divergence has a number of useful and sometimes unique to it properties (e.g. see \cite{Belavkin15:_gsi15} for an overview), and it plays an important role in physics and information theory, because entropy and Shannon's information are defined using the KL-divergence.

The main question that we discuss in this paper is whether these two, seemingly unrelated divergences have anything in common.  In the next section, we recall some definitions and properties of the KL-divergence.  Then we show that the optimal transport problem (OTP) has an implicit constraint, which allows us to relate OTP to variational problems of finding an optimal channel (OCP) that were studied in the rate distortion and the information value theories \cite{Shannon48,Stratonovich65}.  Using the fact that OCP has fewer constraints than OTP, we show that OCP defines a lower bound on the Kantorovich metric, and it depends on the KL-divergence.  Then we consider the dual formulation of the OTP and introduce an additional constraint, which allows us to define another lower bound on the Kantorovich metric.  We then show that the KL-divergence can be decomposed into a sum, one element of which is this lower bound on the Kantorovich metric.

\section{Entropy, Information and the Optimal Channel Problem}


Entropy and Shannon's mutual information are defined using the KL-divergence.  In particular, \emph{entropy} of probability measure $p\in\cP(X)$ relative to a reference measure $r$ is defined as follows:
\begin{align*}
  H[p/r]&:=-\int_X \left[\ln \frac{dp(x)}{dr(x)}\right]\,dp(x)\\
  &=\ln r(X)-D[p,r/r(X)]
\end{align*}
where the second line is written assuming that the reference measure is finite $r(X)<\infty$.  It shows that entropy is equivalent up to a constant $\ln r(X)$ to negative KL-divergence from a normalized reference measure.  The entropy is usually defined with respect to some Haar measure as a reference, such as the counting measure (i.e. $r(E)=|E|$ for $E\subseteq X$ or $dr(x)=1$).  We shall often write $H[p]$ instead of $H[p/r]$, if the choice of a reference measure is clear (e.g. $dr(x)=1$ or $dr(x)=dx$).  We shall also use notation $H_p(X)$ and $H_p(X\mid Y)$ to distinguish between the prior and conditional entropies.

Shannon's mutual information between two random variables $x\in X$ and $y\in Y$ is defined as the KL-divergence of a joint probability measure $w\in\cP(X\times Y)$ from a product $q\otimes p\in\cP(X\times Y)$ of the marginal measures $\pi_X w=q$ and $\pi_Y w=p$:
\begin{align*}
  I(X,Y)&:=D[w,q\otimes p]=\int_{X\times Y}\left[\ln\frac{dw(x,y)}{dq(x)\,dp(y)}\right]\,dw(x,y)\\
  &=H_q(X)-H_q(X\mid Y)=H_p(Y)-H_p(Y\mid X)
\end{align*}
The second line shows that mutual information can be represented by the differences of entropies and the corresponding conditional entropies (i.e. computed respectively using the marginal $dp(y)$ and conditional probability measures  $dp(y\mid x)$).  If both unconditional and conditional entropies are non-negative (this is always possible with a proper choice of a reference measure), then we have inequalities $H_q(X\mid Y)\leq H_q(X)$ and $H_p(Y\mid X)\leq H_p(Y)$, because their differences (i.e. mutual information $I(X,Y)$) is non-negative.  In this case, mutual information satisfies Shannon's inequality:
\[
0\leq I(X,Y)\leq\min[H_q(X),H_p(Y)]
\]
Thus, information is the amount by which the entropy is reduced, and entropy can be defined as the supremum of information or as self-information \cite{Belavkin15:_maxent}:
\[
\sup\{I(X,Y):\pi_X w=q\}=I(X,X)=H_q(X)
\]
Here, we assume that $H_q(X\mid X)=0$ for the entropy of elementary conditional probability measure $q(E\mid x)=\delta_x(E)$, $E\subseteq X$.  Let us now consider the following variational problem.

Given probability measure $q\in\cP(X)$ and cost function $c:X\times Y\to\bR$, find optimal joint probability measure $w=w(\cdot\mid x)\otimes q\in\cP(X\times Y)$ minimizing the expected cost $\bE_w\{c\}$ subject to the constraint on mutual information $I(X,Y)\leq\lambda$.  Because the marginal measure $\pi_X w=q$ is fixed, this problem is really to find an optimal conditional probability $dw(y\mid x)$, which we refer to as the \emph{optimal channel}.  We shall denote the corresponding optimal value as follows:
\begin{equation}
  R_c[q](\lambda):=\inf\left\{\bE_w\{c\}: I(X,Y)\leq\lambda,\ \pi_X w=q\right\}
  \label{eq:ocp}
\end{equation}
This problem was originally studied in the rate distortion theory \cite{Shannon48} and later in the value of information theory \cite{Stratonovich65}.  The value of Shannon's mutual information is defined simply as the difference:
\[
V(\lambda):=R_c[q](0)-R_c[q](\lambda)
\]
It represents the maximum gain (in terms of reducing the expected cost) that is possible due to obtaining $\lambda$ amount of mutual information.

Let us compare the optimal values~(\ref{eq:ocp}) and (\ref{eq:k-distance}) of the OCP and Kantorovich's OTP problems.  On one hand, the OCP problem has only one marginal constraint $\pi_X w=q$, while the OTP has two constraints $\pi_X w=q$ and $\pi_Y w=p$.  On the other hand, the OCP has an information constraint $I(X,Y)\leq\lambda$.  Notice, however, that because fixing marginal measures $q$ and $p$ also fixes the values of their entropies $H_q(X)$ and $H_p(Y)$, the OTP has information constraint implicitly, because mutual information is bounded above $I(X,Y)\leq\min[H_q(X),H_p(Y)]$ by the entropies.  Therefore, in reality the OTP differs from OCP only by one extra constraint --- fixing the output measure $\pi_Y w=p$.  Let us define the following extended version of OTP by introducing the information constraint explicitly:
\[
K_c[p,q](\lambda):=\inf\left\{\bE_w\{c\}: I(X,Y)\leq\lambda,\ \pi_X w=q,\ \pi_Y w=p\right\}
\]
For $\lambda=\min[H_q(X),H_p(Y)]$ one recovers the original value $K_c[p,q]$ defined in~(\ref{eq:k-distance}).  It is also clear that the following inequality holds for any $\lambda$:
\[
R_c[q](\lambda)\leq K_c[p,q](\lambda)
\]
In fact, the equality holds if and only if both problems have the same joint probability measure $w\in\cP(X\times Y)$ as their solution.

\begin{theorem}
  Let $w_{OCP}$ and $w_{OTP}\in\cP(X\times Y)$ be optimal solutions to OCP and OTP problems with the same information constraint $I(X,Y)\leq\lambda$.  Then $R_c[q](\lambda)=K_c[p,q](\lambda)$ if and only if $w_{OCP}=w_{OTP}\in\Gamma[p,q]$.
  \label{th:ocp=otp}
\end{theorem}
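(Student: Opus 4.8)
The plan is to reduce the statement to the elementary fact that the feasible region of the OTP is contained in that of the OCP, so that the whole theorem concerns nested constrained minimizations of one and the same linear functional. First I would name the two feasible sets explicitly,
\[
\Phi:=\{w\in\cP(X\times Y):\pi_X w=q,\ I(X,Y)\leq\lambda\},
\]
\[
\Psi:=\{w\in\cP(X\times Y):\pi_X w=q,\ \pi_Y w=p,\ I(X,Y)\leq\lambda\},
\]
for the OCP- and OTP-feasible measures respectively. Since $\Psi$ is obtained from $\Phi$ by imposing the single extra constraint $\pi_Y w=p$, we have $\Psi\subseteq\Phi$, and $\Psi$ consists of exactly those $w\in\Phi$ that lie in $\Gamma[p,q]$. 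Taking infima of $\bE_w\{c\}$ over the smaller and the larger set immediately reproduces the inequality $R_c[q](\lambda)\leq K_c[p,q](\lambda)$ already noted in the text; the content of the theorem is precisely the condition under which it is saturated.

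For the forward implication I would assume $R_c[q](\lambda)=K_c[p,q](\lambda)$ and exploit that $w_{OTP}\in\Psi\subseteq\Phi$ is automatically OCP-feasible. As $w_{OTP}$ attains $\bE_{w_{OTP}}\{c\}=K_c[p,q](\lambda)=R_c[q](\lambda)$, which is exactly the OCP optimum, $w_{OTP}$ is itself optimal for the OCP and may be identified with $w_{OCP}$; feasibility for the OTP gives $\pi_X w_{OTP}=q$ and $\pi_Y w_{OTP}=p$, i.e.\ $w_{OTP}\in\Gamma[p,q]$, whence $w_{OCP}=w_{OTP}\in\Gamma[p,q]$. The converse is immediate: if $w_{OCP}=w_{OTP}=:w^\ast\in\Gamma[p,q]$, then optimality of $w^\ast$ for the OCP gives $R_c[q](\lambda)=\bE_{w^\ast}\{c\}$ and optimality for the OTP gives $K_c[p,q](\lambda)=\bE_{w^\ast}\{c\}$, so the two optimal values coincide.

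The only genuinely delicate point is the literal equation ``$w_{OCP}=w_{OTP}$'', because both problems minimize a \emph{linear} functional over a convex feasible set and so need not have unique minimizers. What the forward argument really establishes is that the set of OCP minimizers meets $\Gamma[p,q]$, i.e.\ that $w_{OTP}$ may be \emph{chosen} as an OCP solution; to promote this to literal equality I would invoke uniqueness of the OCP minimizer (guaranteeing that any OCP-optimal measure equals $w_{OCP}$), which holds under a suitable strict-convexity or regularity hypothesis. I expect this uniqueness bookkeeping, rather than any analytic difficulty, to be the main obstacle. Geometrically the result is transparent: saturation occurs exactly when the otherwise free output marginal $\pi_Y w_{OCP}$ of the less constrained problem happens to coincide with the prescribed $p$, at which instant the OCP optimizer becomes feasible, and therefore optimal, for the OTP.
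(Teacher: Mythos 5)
Your reduction to nested feasible sets, and both easy directions, are fine: $\Psi\subseteq\Phi$ gives $R_c[q](\lambda)\leq K_c[p,q](\lambda)$; if the values coincide then $w_{OTP}$, being OCP-feasible and attaining the OCP optimum, is itself an OCP solution lying in $\Gamma[p,q]$; and the converse is immediate. You have also put your finger on exactly the right difficulty: since $\bE_w\{c\}$ is linear in $w$, nothing forces the OCP minimizer to be unique a priori, and without uniqueness the literal conclusion $w_{OCP}=w_{OTP}$ (for the \emph{given} $w_{OCP}$, not for a conveniently re-chosen one) can fail. The genuine gap is that you stop there: you propose to ``invoke uniqueness of the OCP minimizer \ldots under a suitable strict-convexity or regularity hypothesis,'' but the theorem carries no such hypothesis, and strict convexity of the objective is unavailable --- it is linear. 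Uniqueness here is not bookkeeping to be assumed; it is the entire content of the paper's proof, and it is derived rather than postulated.

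The paper closes this gap by convex duality. By the Lagrange-multiplier analysis of the information constraint, $w$ solves the OCP if and only if $w\in\partial D^\ast[-\beta c,q\otimes p]$, the subdifferential at $u=-\beta c$ of the functional $D^\ast[u,q\otimes p]=\ln\int_{X\times Y}e^{u(x,y)}\,dq(x)\,dp(y)$, which is the Legendre--Fenchel transform of $w\mapsto D[w,q\otimes p]$. A subdifferential is a convex set, so if $w_{OCP}\neq w_{OTP}$ were two distinct solutions, every convex combination $(1-t)w_{OCP}+t\,w_{OTP}$, $t\in[0,1]$, would also be a solution; by the Fenchel equality this forces $D[\cdot,q\otimes p]$ to be affine along the segment joining them, contradicting the strict convexity of the KL-divergence. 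Hence the OCP minimizer is unique, and your observation that the set of OCP minimizers meets $\Gamma[p,q]$ upgrades to the equality $w_{OCP}=w_{OTP}\in\Gamma[p,q]$ asserted by the theorem. In short: right skeleton, right diagnosis of the crux, but the crux itself --- uniqueness via the duality between strict convexity of $D[w,q\otimes p]$ and single-valuedness of $\partial D^\ast$ --- is missing from your proposal.
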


\begin{proof}
  Measure $w_{OCP}$ is a solution to OCP if and only if it is an element $w_{OCP}\in\partial D^\ast[-\beta c,q\otimes p]$ of subdifferential at function $u(x,y)=-\beta\,c(x,y)$ of a convex functional
  \[
  D^\ast[u,q\otimes p]=\ln\int_{X\times Y} e^{u(x,y)}\,dq(x)\,dp(y)
  \]
  which is the Legendre-Fenchel transform of the KL-divergence $D[w,q\otimes p]$ considered as a functional in the first variable (i.e. $w$).  This can be shown using the standard method of Lagrange multipliers (e.g. see \cite{Stratonovich75:_inf,Belavkin11:_optim}).  If there is another optimal measure $w_{OTP}$ achieving the same optimal value, then it also must be an element of the subdifferential $\partial D^\ast[-\beta c,q\otimes p]$, as well as any convex combination $(1-t)w_{OCP}+tw_{OTP}$, $t\in[0,1]$, because subdifferential is a convex set.  But this means that the KL-divergence $D[w,q\otimes p]$, the dual of $D^\ast[u,q\otimes p]$, is not strictly convex, which is false.\qed
\end{proof}

The optimal solution to OCP has the following general form
\begin{equation}
  dw_{OCP}(x,y)=dq(x)\,dp(y)\,e^{-\beta\,c(x,y)-\kappa(\beta,x)}
  \label{eq:exp-solution}
\end{equation}
where the exponent $\beta$, sometimes called the \emph{inverse temperature}, is the inverse of the Lagrange multiplier $\beta^{-1}$ defined from the information constraint by the equality $I(X,Y)=\lambda$.  In fact, one can show that $\beta^{-1}=dV(\lambda)/d\lambda$.  The normalizing function $\kappa(\beta,x)=\ln\int_Y e^{-\beta\,c(x,y)}\,dp(y)$ is in general non-constant, and the solution~(\ref{eq:exp-solution}) depends on the marginal measure $q\in\cP(X)$.  One can show, however, that if the cost function is translation invariant (i.e. $c(x+a,y+a)=c(x,y)$), then the function $dq(x)\,e^{-\kappa(\beta,x)}=e^{-\kappa_0(\beta)}$ does not depend on $x$, which gives a simplified expression:
\[
dw_{OCP}(x,y)=dp(y)\,e^{-\beta\,c(x,y)-\kappa_0(\beta)}
\]
The measure above does not depend on the input marginal measure $q\in\cP(X)$ explicitly, but only via its influence on the output measure $p\in\cP(Y)$.

The optimal channel $w_{OCP}$ may not coincide with the optimal transportation plan $w_{OTP}$.  Interestingly, from game-theoretic point of view the optimal channels should be preferred, because they achieve smaller expected costs.  If specific output measure $\pi_Yw=p$ is important, however, then optimal channel can potentially be useful in the analysis of the optimal transportation plan.

Finally, let us point out in this section that the KL-divergence $D[p,q]$ between the measures $p$, $q\in\cP(X)$ can be related to mutual information via \emph{cross-information}:
\begin{equation}
  D[w,q\otimes q]=\underbrace{D[w,q\otimes p]}_{I(X,Y)}+D[p,q]
  \label{eq:cross-inf}
\end{equation}
The term cross-information for the KL-divergence $D[w,q\otimes q]$ (notice the difference from mutual information $D[w,q\otimes p]$) was introduced in~\cite{Belavkin15:_maxent} by analogy with cross-entropy.  The expression~(\ref{eq:cross-inf}) is a special case of Pythagorean theorem for the KL-divergence.  As was shown in \cite{Belavkin:_gsi13}, a joint probability measure $w\in\cP(X\times Y)$ together with its marginals $\pi_Xw=q$ and $\pi_Yw=p$ defines a triangle $(w,q\otimes p,q\otimes q)$ in $\cP(X\times Y)$, which is always a right triangle (and the same holds for triangle $(w,q\otimes p,p\otimes p)$).  This means that the KL-divergence between marginal measures $q$ and $p$ can be expressed as the difference:
\[
D[p,q]=D[w,q\otimes q]-I(X,Y)
\]
Taking into account the constraint $I(X,Y)\leq\lambda$ and assuming that OCP and OTP have the same solution $w\in\cP(X\times Y)$ (i.e. $w\in\Gamma[p,q]$), we can relate the KW-metric and the KL-divergence in one expression:
\[
K_c[p,q](\lambda)=\inf\left\{\bE_w\{c\}: D[p,q]\geq D[w,q\otimes q]-\lambda\,,\ w\in\Gamma[p,q]\right\}
\]

\section{Dual Formulation of the Optimal Transport Problem}

Kantorovich's great achievement was dual formulation of the optimal transport problem way before the development of convex analysis and the duality theory.  Given a cost function $c:X\times Y\to\bR$ consider real functions $f:X\to\bR$ and $g:Y\to\bR$ satisfying the condition: $f(x)-g(y)\leq c(x,y)$ for all $(x,y)\in X\times Y$.  Then the dual formulation is the following maximization over all such functions:
\begin{equation}
  J_c[p,q]:=\sup\left\{\bE_p\{f\}-\bE_q\{g\}: f(x)-g(y)\leq c(x,y)\right\}
  \label{eq:dual}
\end{equation}
where we assumed $X\equiv Y$.  It is clear that the following inequality holds:
\[
J_c[p,q]\leq K_c[p,q]
\]
We shall attempt to use this dual formulation to find another relation between the KL-divergence and the KW-metric.  First, consider the following decomposition of the KL-divergence:
\begin{align}
  D[p,q]&=D[p,r]+D[r,q]-\int_X\ln\frac{dq(x)}{dr(x)}\,[dp(x)-dr(x)]\label{eq:kl-cosine}\\
  &=D[p,r]-D[q,r]-\int_X\ln\frac{dq(x)}{dr(x)}\,[dp(x)-dq(x)]\label{eq:kl-minus}
\end{align}
Equation~(\ref{eq:kl-cosine}) is the \emph{law of cosines} for the KL-divergence (e.g. see \cite{Belavkin:_gsi13}).  It can be proved either by second order Taylor expansion in the first argument or directly by substitution.  Equation~(\ref{eq:kl-minus}) can be proved by using the formula:
\[
D[q,r]+D[r,q]=\int_X\ln\frac{dq(x)}{dr(x)}\,[dq(x)-dr(x)]
\]

We now consider functions $f(x)-g(y)\leq c(x,y)$ satisfying additional constraints:
\begin{align*}
  \beta f(x)&=\nabla D[p,r]=\ln\frac{dp(x)}{dr(x)}\,,\qquad\beta\geq0\\
  \alpha g(x)&=\nabla D[q,r]=\ln\frac{dq(x)}{dr(x)}\,,\qquad\alpha\geq0
\end{align*}
Thus, $\beta\,f$ and $\alpha\,g$ are the gradients of divergences $D[p,r]$ and $D[q,r]$ respectively, and this means that probability measures $p,q\in\cP(X)$ have the following exponential representations:
\begin{align*}
  dp(x)&=e^{\beta\,f(x)-\kappa[\beta f]}\,dr(x)\\
  dq(x)&=e^{\alpha\,g(x)-\kappa[\alpha g]}\,dr(x)
\end{align*}
where $\kappa[(\cdot)]=\ln\int_X e^{(\cdot)}\,dr(x)$ is the normalizing constant (the value of the cumulant generating function).  One can show that
\begin{align*}
  \frac{d}{d\beta}\kappa[\beta\,f]=\bE_p\{f\}\,,&& D[p,r]=\beta\,\bE_p\{f\}-\kappa[\beta\,f]\\
  \frac{d}{d\alpha}\kappa[\alpha\,g]=\bE_q\{g\}\,,&& D[q,r]=\alpha\,\bE_q\{g\}-\kappa[\alpha\,g]
\end{align*}
Substituting these formulae into~(\ref{eq:kl-minus}) we obtain
\[
D[p,q]=\beta\bE_p\{f\}-\alpha\bE_q\{g\}-\left(\kappa[\beta f]-\kappa[\alpha g]\right)-\alpha\int_X g(x)\,[dp(x)-dq(x)]
\]
Let us define the following value:
\[
J_{c,\varepsilon}[p,q]:=\frac{1}{\varepsilon}\left[\beta\bE_p\{f\}-\alpha\bE_q\{g\}\right]
\]
where $\varepsilon=\inf\{\epsilon\geq0:\beta f(x)-\alpha g(y)\leq \epsilon\,c(x,y)\}$. The value above reminds the value $J_c[p,q]$ of the dual problem to OTP, defined in~(\ref{eq:dual}).  However, because we also require that functions $f$ and $g$ to satisfy additional constraints (the gradient conditions), we have the following inequality:
\[
J_{c,\varepsilon}[p,q]\leq J_c[p,q]\leq K_c[p,q]
\]
Using these inequalities, we can rewrite equation~(\ref{eq:kl-minus}) as follows:
\[
D[p,q]\leq\varepsilon\,K_c[p,q]-\left(\kappa[\beta f]-\kappa[\alpha g]\right)-\alpha\int g(x)\,[dp(x)-dq(x)]
\]

\begin{theorem}
  Let the pair of functions $(f,g)$ be the solution to the dual OTP~(\ref{eq:dual}).  If there exists a reference measure $r\in\cP(X)$ such that $f=\nabla D[p,r]$ and $g=\nabla D[q,r]$, then
  \[
  D[p,q]=K_c[p,q]-\left(\kappa[f]-\kappa[g]\right)-\int g(x)\,[dp(x)-dq(x)]
  \]
  \label{th:dual-otp=kl}
\end{theorem}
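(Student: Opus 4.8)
The plan is to recognize the assertion as the equality case of the inequality displayed immediately before it, and to verify that under the two hypotheses every inequality in the chain $J_{c,\varepsilon}[p,q]\le J_c[p,q]\le K_c[p,q]$ becomes an equality while the scalars collapse to $\beta=\alpha=\varepsilon=1$. The hypothesis $f=\nabla D[p,r]$, $g=\nabla D[q,r]$ is exactly the gradient condition introduced above with $\beta=\alpha=1$, so it supplies the exponential representations $dp=e^{f-\kappa[f]}\,dr$ and $dq=e^{g-\kappa[g]}\,dr$ together with the identities $D[p,r]=\bE_p\{f\}-\kappa[f]$ and $D[q,r]=\bE_q\{g\}-\kappa[g]$.

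First I would substitute these two identities into the decomposition~(\ref{eq:kl-minus}) of $D[p,q]$. With $\beta=\alpha=1$ this is the purely algebraic step already carried out above and yields the exact identity
\[
D[p,q]=\bE_p\{f\}-\bE_q\{g\}-\bigl(\kappa[f]-\kappa[g]\bigr)-\int_X g(x)\,[dp(x)-dq(x)]\,,
\]
valid without any appeal to optimality. The whole theorem therefore reduces to the single replacement of $\bE_p\{f\}-\bE_q\{g\}$ by $K_c[p,q]$.

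That replacement I would justify in two moves. Because $(f,g)$ is by hypothesis a maximizer of the dual problem~(\ref{eq:dual}), its admissible value is the dual optimum, $\bE_p\{f\}-\bE_q\{g\}=J_c[p,q]$; it then remains only to close the duality gap and write $J_c[p,q]=K_c[p,q]$. Substituting $\bE_p\{f\}-\bE_q\{g\}=K_c[p,q]$ into the displayed identity gives the claim. Consistently with the previous section, this is exactly its equality case: the scalars collapse to $\beta=\alpha=1$ by the gradient hypothesis and to $\varepsilon=1$ because the dual constraint $f(x)-g(y)\le c(x,y)$ forces $\varepsilon\le1$ while its tightness on the support of the optimal plan (where $c>0$) forbids $\varepsilon<1$, so the coefficient of $K_c[p,q]$ is indeed $1$.

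The one genuinely nontrivial input is the closing of the duality gap, strong Kantorovich duality $J_c[p,q]=K_c[p,q]$: the body of the paper records only the easy inequality $J_c[p,q]\le K_c[p,q]$, so I would supply the reverse inequality from the classical Kantorovich duality theorem, which holds under the regularity already assumed in the introduction, namely a lower semicontinuous cost and inner-regular (Radon) measures on a complete separable metric space, guaranteeing both attainment of the supremum in~(\ref{eq:dual}) and the absence of a gap with the primal infimum~(\ref{eq:k-distance}). The subtler point, and the place I expect the real difficulty, is the compatibility of the two roles imposed on $(f,g)$: that a single reference measure $r$ can realize the dual optimizer as the pair of divergence gradients and simultaneously force $\varepsilon=1$. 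Checking that these conditions are mutually consistent---rather than vacuous---is the main obstacle, and it is what isolates the special geometry under which the inequality of the previous section sharpens to the stated equality.
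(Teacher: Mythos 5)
Your proof follows essentially the same route as the paper's: substitute the exponential representations with $\alpha=\beta=1$ into the decomposition~(\ref{eq:kl-minus}), obtain the exact identity containing $\bE_p\{f\}-\bE_q\{g\}$, and then identify that difference with $J_c[p,q]=K_c[p,q]$ using the hypothesis that $(f,g)$ solves the dual problem. The only substantive difference is that you explicitly invoke classical strong Kantorovich duality (valid under the lower-semicontinuity and Polish-space assumptions of the introduction) to justify $J_c[p,q]=K_c[p,q]$, a step the paper's proof asserts without comment, while your side remark about forcing $\varepsilon=1$ is tangential and not needed for the stated identity.
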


\begin{proof}
  The assumptions $f=\nabla D[p,r]$ and $g=\nabla D[q,r]$ mean that the Lagrange multipliers are $\alpha=\beta=1$, and probability measures have the form $p=\exp(f-\kappa[f])\,r$ and $q=\exp(g-\kappa[g])\,r$.  Substituting these expressions into equation~(\ref{eq:kl-minus}) will result in the expression containing the difference of expectations $\bE_p\{f\}-\bE_q\{g\}$, which equals to $J_c[p,q]=K_c[p,q]$.\qed
\end{proof}

\section*{Discussion}

In their original definitions, the optimal transport problem and the related to it Kantorovich-Wasserstein metric have no connection to the Kullback-Leibler divergence.  We have demonstrated that by relaxing one constraint, namely fixing the output measure, the optimal transport problem becomes mathematically equivalent to the optimal channel problem in information theory, which uses a constraint on the KL-divergence between the joint and the product of marginal measures (i.e. on mutual information).  This way, an optimal channel defines a lower bound on the KW-metric.  Interestingly, for this reason optimal channels should be preferred to optimal transportation plans purely from game-theoretic point of view.  Applying Pythagorean theorem for joint and product of marginal measures allowed us to relate the constraint on mutual information to the constraint on the KL-divergence between the marginal measures of optimal channel.

In addition to this variational approach, we have considered a geometric idea based on the law of cosines for the KL-divergence to decompose the divergence between two probability measures into a sum that includes divergences from a third reference measure.  We have shown then that a component of this decomposition can be related to the dual formulation of the optimal transport problem.

Generally, the relations presented have a form of inequalities.  Additional conditions have been derived in Theorems~\ref{th:ocp=otp} and \ref{th:dual-otp=kl} for the cases when the relations hold with equalities.

\begin{acknowledgement}
This work is dedicated to the anniversary of Professor Shun Ichi Amari, one of the founders of information geometry.  The work was supported in part by the Biotechnology and Biological Sciences Research Council [grant numbers BB/L009579/1, BB/M021106/1].
\end{acknowledgement}



\bibliography{rvb,nn,other,newbib,ica,evolution,transport}

\end{document}